\documentclass[12pt]{article}
\title{On the Algebraic Structure Underlying the Support Enumerators of Linear Codes}
\author{Nitin Kenjale \& Anuradha S. Garge \\
\small 
K. J. Somaiya Institute of Technology, Sion, \\
\small
Mumbai-400022, India \\ \small \&\\
\small
Department of Mathematics, \\
\small 
University of Mumbai, Mumbai- 400098, India.} 
\date{{\small \today}} 
\usepackage{amsmath,amssymb,amsfonts}
\usepackage{amsthm}
\usepackage{graphicx,latexsym}
\usepackage{tikz} 
\usepackage{amscd, amsmath,amssymb,amsfonts}
\usepackage{amsthm} 
\usepackage{ulem}
\usepackage{graphicx}
\usepackage{amscd, amsmath, amssymb, amsfonts}
\usepackage{amsthm}
\usepackage{ulem}
\usepackage{tabularx}

\newtheorem{thm}{Theorem}[section]
\theoremstyle{definition}

\newtheorem{dfn}{Definition}[section]
\theoremstyle{remark}

\theoremstyle{plain}
\newtheorem{lem}[thm]{Lemma}

\usepackage[pdftex]{hyperref}

\theoremstyle{plain}

 

\begin{document}
\maketitle{}
\begin{abstract}
  In this paper, we have introduced the concepts of support distribution $(S_i)$ and the support enumerator $S_C(z)$ as refinements of the classical weight distribution and weight enumerator respectively, capturing coordinate level activity in linear block codes. More precisely, we have established formula for counting codewords in the linear code $C$ whose $i^{th}$ coordinate is nonzero. Moreover, we derived a MacWilliams' type identity, relating the normalized support enumerators of a linear code and its dual, explaining how coordinate information transforms under duality. Using this  identity we deduce a new criterion for self-duality based on the equality of support distributions. These results provide a more detailed understanding of code structure and complement classical weight based duality theory.  
\end{abstract}
\noindent {\bf {Keywords:}}
Linear Block Codes, Weight distribution, Support, Weight enumerator, Support enumerator, Dual Code, Finite Fields, Additive Character.
\vspace{2 mm}

\noindent {\bf{MSC 2020 Classification}}:
11T71, 94B05, 94B25. 
\section{Introduction}
The MacWilliams' identity is one of the central results in coding theory, establishing an elegant relationship between the weight enumerator of a linear block code and that of its dual \cite{RR}. This identity reveals deep algebraic and combinatorial connections between a code and its orthogonal complement, and it has motivated decades of research exploring duality phenomena in linear codes. However, weight enumerators capture only global information, the total number of codewords of a given Hamming weight and therefore do not distinguish how individual coordinates participate in the structure of a code. In many settings, such as local error analysis, coordinate reliability, and the study of coordinate symmetries, this global view is insufficient.

To address this limitation, we define and study another invariant called the support distribution $(S_i)$ and its generating function, the support enumerator $S_C(z)$. We define these in later section. Unlike the classical weight distribution, which collapses all coordinates into a single measure, the support enumerator provides coordinate level information that reflects local structural behavior of the code.

In this paper, we derive a MacWilliams' type identity for support enumerators, relating the normalized support enumerators of a linear code and its dual. The obtained identity captures how coordinate-level information transforms under duality and highlights the role of coordinate positions where the standard basis vectors do not lie in the code or its dual. An important consequence of our identity is a new criterion for self-duality, expressed directly in terms of equality of support distributions.

The paper is organized as follows. Section 1.1 reviews preliminaries on linear codes, additive characters and classical enumerators. Section 2 introduces the support distribution, support enumerator and the algebra underlying them. Section 3 presents the main identity relating support enumerators of a linear code and its dual, together with proofs and structural interpretation. Section 4 discusses necessary condition for self-duality of a linear code. In Section 5 we give illustrative examples and computations for the theorems in preceding sections. 

\subsection{Preliminaries} Let us denote by $\mathbb{F}_q$, the finite field over $q$ elements and by $\mathbb{F}_q^n$, the $n$ dimentional vector space over $\mathbb{F}_q$. We denote by $d(x,y)$, the Hamming distance \cite{RR} between vectors $x,y\in \mathbb{F}_q^n$. We call $C$ a linear code over $\mathbb{F}_q$ if it is a linear subspace of $\mathbb{F}_q^n$ and elements of $C$ are called as codewords of $C$ \cite{Bouyuklieva2024}.

A linear code $C \subseteq \mathbb{F}_q^n$ of dimension $k$ is called an $[n,k,d]$ code, where $n$ is the length of the code (i.e., number of bits in each codeword), $k$ is the dimension (i.e., number of independent message bits) and $d$ is the minimum Hamming distance between any two distinct codewords of $C$.

The fundamental parameters of a linear code include its \textit{weight distribution} and \textit{weight enumerator}, which capture the Hamming weight profile of the code.

The weight distribution of a code $C$ is the sequence $\{A_w\}_{w=0}^n$, where $A_w$ denotes the number of codewords in $C$ of Hamming weight $w$ i.e., 
$A_w = |\{c \in C : \ {wt}(c) = w\}|,$ where $wt(c)$ is the Hamming weight of codeword $c$. 
This information is compactly encoded in the weight enumerator. The (Hamming) weight enumerator of $C$ is the polynomial $\displaystyle W_C(z) = \sum_{w=0}^n A_w z^w.$ Its homogeneous version i.e., homogeneous weight enumerator is 
$\displaystyle
W_C(x, y) = \sum_{c \in C} x^{n - \operatorname{wt}(c)} y^{\operatorname{wt}(c)} = \sum_{w = 0}^n A_w x^{n - w} y^w
,$ (See \cite{FN}).

To relate the weight enumerators of a code and its dual, one typically uses additive characters of the underlying finite field.
Let $\mathbb{F}_q$ be a finite field with $q = p^m$, $p$ prime. An additive character of $\mathbb{F}_q$ is a function
$\chi : \mathbb{F}_q \to \mathbb{C}^*$
such that
$
\chi(a + b) = \chi(a)\chi(b) \  for\ all\ a, b \in \mathbb{F}_q
.$ It follows that, $\chi(0)=1.$ A classical property of additive characters is 
\[
\sum_{x\in\mathbb{F}_q} \chi(x)
=
\begin{cases}
q, & \text{if $\chi$ is the trivial additive character},\\[4pt]
0, & \text{if $\chi$ is a nontrivial additive character}\quad \cite{McE}.
\end{cases}
\]
For a nontrivial additive character $\chi$ and  $c=(c_1,c_2,\dots,c_n)$ one also has
$$
\sum_{\lambda \in\mathbb{F}_q^*} \chi(\lambda c_j)
=
\begin{cases}
q-1, & \text{if } c_j=0,\\[4pt]
-1,  & \text{if } c_j\neq 0 \quad \cite{McE}.
\end{cases}
$$
We also recall the following classical lemma.
\begin{lem} \cite{McE} \label{lem:1.1}
 Let $C\subseteq\mathbb{F}_q^n$ be a linear code and $C^\perp$ be its dual code. Let $\chi:\mathbb{F}_q\to\mathbb{C}^* $ be a nontrivial additive character. For each $u\in\mathbb{F}_q^n,$ define
$
\displaystyle S(u)=\sum_{c\in C}\chi(u\cdot c^t).
$
Then
$$
S(u)=
\begin{cases}
|C|, & u\in C^\perp,\\[4pt]
0, & \text{otherwise}.
\end{cases}
$$
\end{lem}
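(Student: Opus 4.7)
The plan is to handle the two cases $u \in C^\perp$ and $u \notin C^\perp$ separately. The first case is essentially immediate from the definition of the dual: if $u \in C^\perp$ then $u \cdot c^t = 0$ for every $c \in C$, so $\chi(u \cdot c^t) = \chi(0) = 1$ for every codeword, and the sum $S(u)$ collapses to $|C|$. No real work is needed here beyond citing the preceding identity $\chi(0) = 1$.

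The substantive case is $u \notin C^\perp$. Here I would introduce the $\mathbb{F}_q$-linear evaluation map
\[
\phi_u : C \longrightarrow \mathbb{F}_q, \qquad c \longmapsto u \cdot c^t .
\]
Since $u \notin C^\perp$, the map $\phi_u$ is not identically zero, so its image is a nonzero $\mathbb{F}_q$-subspace of the one-dimensional space $\mathbb{F}_q$, and hence equals $\mathbb{F}_q$. Therefore $\phi_u$ is surjective, and by linearity every fibre $\phi_u^{-1}(a)$, for $a \in \mathbb{F}_q$, is a coset of $\ker \phi_u$ and has the same cardinality $|C|/q$.

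With the fibres uniform, I would reorganize $S(u)$ by grouping codewords according to the value of $u \cdot c^t$:
\[
S(u) \;=\; \sum_{c \in C} \chi(u \cdot c^t) \;=\; \sum_{a \in \mathbb{F}_q} |\phi_u^{-1}(a)|\, \chi(a) \;=\; \frac{|C|}{q} \sum_{a \in \mathbb{F}_q} \chi(a).
\]
Since $\chi$ is a nontrivial additive character of $\mathbb{F}_q$, the classical orthogonality relation stated in the preliminaries gives $\sum_{a \in \mathbb{F}_q}\chi(a) = 0$, whence $S(u) = 0$, completing the proof.

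The main (minor) obstacle is pinpointing why the fibres of $\phi_u$ are equidistributed; once surjectivity of $\phi_u$ is established from the fact that its image is a nonzero $\mathbb{F}_q$-subspace of $\mathbb{F}_q$, the rest is a one-line application of the character-sum identity. The argument is a standard character-orthogonality computation and should fit in a short paragraph.
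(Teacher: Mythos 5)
Your proof is correct and complete. The paper does not prove this lemma at all --- it is quoted as a classical result with only a citation to McEliece --- so there is no in-paper argument to compare against; your fibre-counting argument (surjectivity of the evaluation map $\phi_u$ onto $\mathbb{F}_q$ because its image is a nonzero subspace of a one-dimensional space, equal-sized fibres as cosets of $\ker\phi_u$, and then the orthogonality relation $\sum_{a\in\mathbb{F}_q}\chi(a)=0$ for nontrivial $\chi$) is the standard proof and supplies exactly what the citation leaves implicit.
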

These properties of additive character are the key analytical tools used to derive the celebrated MacWilliams' identity, that we recall here and which describes how the weight enumerator of a linear code determines that of its dual.
\begin{thm}(MacWilliam's Theorem)\cite{RR}
For every linear code $C$ over $\mathbb{F}_q,$
\[
W_{C^\perp}(x, y) = \frac{1}{|C|} W_C(x + (q - 1)y, x - y).
\]
\qed\end{thm}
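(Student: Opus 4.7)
The plan is to prove the identity by the standard character-theoretic route, using Lemma~\ref{lem:1.1} to swap the role of $C$ and $C^{\perp}$. First I would expand the left-hand side as a sum over all of $\mathbb{F}_q^n$, inserting the indicator function of $C^{\perp}$. Lemma~\ref{lem:1.1} allows me to replace that indicator by a character sum over $C$: for any fixed nontrivial additive character $\chi$ of $\mathbb{F}_q$,
$$\mathbf{1}_{C^\perp}(u)=\frac{1}{|C|}\sum_{c\in C}\chi(u\cdot c^t).$$
Substituting this in gives
$$W_{C^\perp}(x,y)=\frac{1}{|C|}\sum_{u\in\mathbb{F}_q^n}\sum_{c\in C}\chi(u\cdot c^t)\,x^{n-\operatorname{wt}(u)}y^{\operatorname{wt}(u)}.$$

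Next I would interchange the two sums and exploit the fact that both the character and the weight monomial factor coordinate-wise. Writing $u=(u_1,\dots,u_n)$, $c=(c_1,\dots,c_n)$, we have $\chi(u\cdot c^t)=\prod_{j=1}^n\chi(u_jc_j)$, and $x^{n-\operatorname{wt}(u)}y^{\operatorname{wt}(u)}=\prod_{j=1}^n f(u_j)$, where $f(0)=x$ and $f(t)=y$ for $t\neq 0$. The $n$-fold sum over $u$ therefore collapses to a product of $n$ single-variable sums, yielding
$$W_{C^\perp}(x,y)=\frac{1}{|C|}\sum_{c\in C}\prod_{j=1}^n\Bigl(\sum_{u_j\in\mathbb{F}_q}\chi(u_jc_j)f(u_j)\Bigr).$$
For each inner sum I would invoke the character identities recalled in the preliminaries: if $c_j=0$ the sum is $x+(q-1)y$, while if $c_j\neq 0$ it becomes $x+y\sum_{\lambda\in\mathbb{F}_q^{*}}\chi(\lambda c_j)=x-y$. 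Multiplying over the $n$ coordinates contributes a factor $(x+(q-1)y)$ for each zero entry of $c$ and $(x-y)$ for each nonzero entry, producing $(x+(q-1)y)^{n-\operatorname{wt}(c)}(x-y)^{\operatorname{wt}(c)}$.

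Summing this over $c\in C$ and comparing with the defining expression $W_C(X,Y)=\sum_{c\in C}X^{n-\operatorname{wt}(c)}Y^{\operatorname{wt}(c)}$ gives the asserted identity after the $1/|C|$ factor. I do not foresee a serious obstacle: the only nontrivial step is recognizing the simultaneous coordinate-wise factorization of the character and the weight monomial, which reduces the whole argument to the two character evaluations already listed in the preliminaries. The remaining work is bookkeeping, namely the interchange of sums and the clean identification of the resulting product with $W_C$ evaluated at the transformed arguments.
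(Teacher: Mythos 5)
Your argument is correct and complete: the insertion of the indicator $\mathbf{1}_{C^\perp}$ via Lemma~\ref{lem:1.1}, the coordinate-wise factorization of both the character and the weight monomial, and the two evaluations $x+(q-1)y$ (for $c_j=0$) and $x-y$ (for $c_j\neq 0$) are exactly the right steps, and they use precisely the character identities recalled in the preliminaries. Note that the paper itself states this theorem without proof, citing \cite{RR}; your write-up is the standard character-theoretic proof found there, and it is also the same technique (indicator functions expanded as character sums) that the paper later deploys in its own Theorem~\ref{thm:3}.
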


\section{The Support Framework}\label{sec:supportframework} 
In this section, we recall the fundamental notion of support of codewords and then we introduce definitions of the support distribution and the support enumerator of a linear code $C\subseteq \mathbb{F}^n_q$.
 
\begin{dfn}{(Support of a Codeword)}\cite{huffman2003} \label{def:3.4}\\
The support of a codeword $c = (c_1, \dots, c_n) \in C$ is defined as
$
\operatorname{supp}(c) = \{i \in \{1, \dots, n\} : c_i \ne 0\}.
$
\end{dfn}
\begin{dfn}{(Support Distribution)}\label{def:3.5}\\
The support distribution of $C$ is a sequence $\{S_i\}_{i=1}^n$, where
$S_i$ counts the number of codewords in $C$ which are nonzero in the $i$-th coordinate i.e., $S_i = |\{c \in C : c_i \ne 0\}|$.
\end{dfn}
\begin{dfn}{(Support Enumerator)}\label{def:3.6}\\
The support enumerator of $C$ is the univariate polynomial 
$
\displaystyle S_C(z)= \sum_{i = 1}^n S_i z^i.
$
This polynomial captures the total nonzero occurrence of the code over each coordinate.
\end{dfn}

Below we establish relation between $S_i$ for $i\in \{1,2,\dots, n\}$ and $A_w$ for $w\in \{0,1,\dots,n\}$.

\begin{thm} \label{thm:4}
Let \( C \subseteq \mathbb{F}_q^n \) be a linear code, then 
$$
\sum_{i=1}^n S_i = \sum_{c \in C} \text{wt}(c) = \sum_{w=1}^n w \cdot A_w.
$$
\end{thm}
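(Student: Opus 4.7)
The plan is to prove the theorem by a standard double-counting argument on an indicator set, followed by a straightforward regrouping of a sum.

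First, I would introduce the set
\[
T = \{(c, i) \in C \times \{1, 2, \dots, n\} : c_i \neq 0\}
\]
and compute $|T|$ in two different ways. Fixing the second coordinate first, for each $i \in \{1, \dots, n\}$ the number of $c \in C$ with $c_i \neq 0$ is, by Definition \ref{def:3.5}, exactly $S_i$; summing over $i$ gives $|T| = \sum_{i=1}^n S_i$. Fixing the first coordinate first, for each $c \in C$ the number of indices $i$ with $c_i \neq 0$ is, by Definition \ref{def:3.4}, exactly $|\operatorname{supp}(c)| = \operatorname{wt}(c)$; summing over $c$ gives $|T| = \sum_{c \in C} \operatorname{wt}(c)$. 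Equating the two expressions for $|T|$ establishes the first equality.

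For the second equality, I would partition $C$ according to Hamming weight. Writing
\[
\sum_{c \in C} \operatorname{wt}(c) = \sum_{w=0}^n \ \sum_{\substack{c \in C \\ \operatorname{wt}(c) = w}} w = \sum_{w=0}^n w \cdot A_w,
\]
the $w = 0$ term vanishes, so the sum collapses to $\sum_{w=1}^n w \cdot A_w$, which is the stated right-hand side.

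Since the statement is essentially a combinatorial accounting identity, there is no real obstacle; the only thing to be careful about is making the two countings of $T$ precise and invoking the definitions of $S_i$, $\operatorname{wt}(c)$, and $A_w$ cleanly so that the reader sees the argument is purely a change in the order of summation.
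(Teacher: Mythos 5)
Your proposal is correct and follows essentially the same approach as the paper: both count the total number of nonzero entries over all pairs $(c,i)$ in two ways to get the first equality, then partition $C$ by Hamming weight for the second. Your explicit introduction of the set $T$ merely formalizes the double-counting that the paper carries out in prose.
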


\begin{proof}
We begin by considering the total number of nonzero entries in all codewords in \( C \). By definition, for each coordinate \( i \in \{1, \dots, n\} \), the quantity $S_i$ counts the number of codewords \( c \in C \) such that \( c_i \ne 0 \). Therefore,
$$
\sum_{i=1}^n S_i = \sum_{c \in C} \text{wt}(c),
$$
since summing over all coordinates gives the total number of nonzero entries across all codewords.

Now, partition the words of code $C$ weightwise. Let $A_w$ be the number of codewords in \( C \) of weight exactly \( w \). Then the total weight of all the codewords is
$$
\sum_{c \in C} \text{wt}(c) = \sum_{w=1}^n w \cdot A_w,
$$
because each of the $A_w$ codewords of weight \( w \) contribute \( w \) to the total sum.
Combining the above statements, we obtain the desired identity.
$$
\sum_{i=1}^n S_i = \sum_{c \in C} \text{wt}(c) = \sum_{w=1}^n w \cdot A_w.
$$
\end{proof}

\noindent
Now we illustate the support enumerator. We determine the support enumerator of the simplex code. 
Let $ \mathcal{C} $ be the simplex code, which is the dual of the Hamming code over $ \mathbb{F}_q $. Then $\mathcal{C}$  is a linear $[n, m, q^{m-1}]$ code where
$ n = \frac{q^m - 1}{q - 1}$ and
all nonzero codewords in $ \mathcal{C} $ have Hamming weight $ q^{m-1} $ \cite{RR}. Clearly, $ \mathcal{C} $ has $q^m - 1$ nonzero codewords. 
Each codeword has $ q^{m-1} $ nonzero positions \cite{RR},\cite{Mattson}, \cite{Weiss}, so the total number of nonzero entries across all codewords is
$(q^m - 1) \cdot q^{m-1}.$

Since the automorphism group of the simplex code acts transitively on the coordinate positions \cite{Borges2019}, each coordinate appears equally often in the supports of the nonzero codewords. Therefore, the number of codewords in $ \mathcal{C} $ that are nonzero in the $ i $-th coordinate is
$$
S_i = \frac{(q^m - 1) \cdot q^{m-1}}{n} = \frac{(q^m - 1) \cdot q^{m-1}}{\frac{q^m - 1}{q - 1}} = q^{m-1}(q - 1).
$$
Thus, the support enumerator of \( \mathcal{C} \) is given by
$$
S_{\mathcal{C}}(z) = \sum_{i=1}^{n} S_i z^i = q^{m-1}(q - 1) \cdot \sum_{i=1}^{n} z^i.
\hspace{2.5 cm} (*)$$ 

In the following theorem, we derive an expression for the number of codewords in linear code $C$ with $i^{th}$ non-zero coordinate, depending on whether $e_i\in C^\perp$(dual of $C$) or not, where, $e_i$ is the standard basis vector in $\mathbb{F}_q^n$, having $1$ in the $i^{th}$ coordinate and zero elsewhere.   

\begin{thm} \label{thm:3}
Let $C \subseteq \mathbb{F}_q^n$ be a linear code and $C^\perp$ be its dual code. Then
$$
S_i = \begin{cases}
0 & \text{if } e_i \in C^\perp, \\
\frac{(q-1)}{q} |C| & \text{otherwise}.
\end{cases}
$$ 
\end{thm}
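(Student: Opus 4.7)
The plan is to compute $S_i$ by first counting the codewords with $c_i = 0$ and subtracting from $|C|$, using the additive character machinery already set up in the preliminaries so that the proof proceeds in the same spirit as the rest of the paper. Writing $N_i = |\{c \in C : c_i = 0\}|$, so that $S_i = |C| - N_i$, I would represent the indicator of $c_i = 0$ by the standard character sum
$$
\mathbf{1}_{c_i = 0} \;=\; \frac{1}{q}\sum_{\lambda \in \mathbb{F}_q} \chi(\lambda c_i),
$$
for a fixed nontrivial additive character $\chi$ of $\mathbb{F}_q$. Since $c_i = e_i \cdot c$, summing over $c \in C$ and swapping the order of summation yields
$$
N_i \;=\; \frac{1}{q}\sum_{\lambda \in \mathbb{F}_q} \sum_{c \in C} \chi(\lambda\, e_i \cdot c) \;=\; \frac{1}{q}\sum_{\lambda \in \mathbb{F}_q} S(\lambda e_i),
$$
in the notation of Lemma \ref{lem:1.1}.

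Next I would evaluate the inner sum case by case using Lemma \ref{lem:1.1}. If $e_i \in C^\perp$, then $\lambda e_i \in C^\perp$ for every $\lambda \in \mathbb{F}_q$, so $S(\lambda e_i) = |C|$ for all $q$ values of $\lambda$, giving $N_i = |C|$ and hence $S_i = 0$. If instead $e_i \notin C^\perp$, then since $C^\perp$ is a subspace, $\lambda e_i \in C^\perp$ holds only for $\lambda = 0$; the sum therefore collapses to the single term $S(0) = |C|$, yielding $N_i = |C|/q$ and hence $S_i = \frac{q-1}{q}|C|$.

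I do not anticipate a serious obstacle; the only point requiring a word of justification is the implication $\lambda e_i \in C^\perp$ with $\lambda \neq 0 \Longrightarrow e_i \in C^\perp$, which follows at once from closure of $C^\perp$ under scalar multiplication. For completeness I would also mention the alternative one-line argument via rank-nullity: the coordinate projection $\pi_i : C \to \mathbb{F}_q$, $c \mapsto c_i$, is $\mathbb{F}_q$-linear and vanishes precisely when $e_i \in C^\perp$; when nonzero it is surjective onto a one-dimensional space, so $|\ker \pi_i| = |C|/q$, and the same dichotomy for $S_i$ follows immediately. The character-sum proof is preferable here as it aligns with the analytic toolkit the paper will use for the forthcoming MacWilliams-type identity.
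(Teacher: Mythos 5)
Your proposal is correct and follows essentially the same route as the paper: both count $N_i=|\{c\in C: c_i=0\}|$ via the character-sum indicator $\frac{1}{q}\sum_{\lambda\in\mathbb{F}_q}\chi(\lambda c_i)$, swap the order of summation, invoke Lemma \ref{lem:1.1} on $\lambda e_i$, and use closure of $C^\perp$ under scalar multiplication to reduce to the dichotomy $e_i\in C^\perp$ or not. The rank--nullity remark you add is a valid (and shorter) alternative, but your main argument is the paper's argument.
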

\begin{proof}
Let us define two disjoint subsets of $C$ that forms partition of $C$,
\begin{align*}
A &= \{ c \in C : c_i \neq 0 \}\\
B &= \{ c \in C : c_i = 0 \}.
\end{align*}
Clearly, $C = A \sqcup B$, so $|A| = |C| - |B|$. \hfill $(1)$

Our aim is to compute $|B|$, the number of codewords with $0$ in the $i^{th}$ coordinate. We substract it from size of $C$ to get count of codewords with nonzero $i^{th}$ coordiante. Let us define the indicator function as follows:
$$
\delta_0(x) = \begin{cases}
1 & \text{if } x = 0 \\
0 & \text{if } x \neq 0
\end{cases}
$$

We express this function using character sums. Let \( \chi \) be a nontrivial additive character on \( \mathbb{F}_q \). Then for $x=0,$ $\chi(ax)=1$ and $\displaystyle \sum_{a \in \mathbb{F}_q} \chi(ax)=q$. Also, for $x\neq 0,$ $\displaystyle \sum_{a \in \mathbb{F}_q} \chi(ax)=0$, therefore,
$
\displaystyle \delta_0(x) = \dfrac{1}{q} \sum_{a \in \mathbb{F}_q} \chi(ax)
$.
Thus, 
$$
|B| = \sum_{c \in C} \delta_0(c_i) = \sum_{c \in C} \left( \dfrac{1}{q} \sum_{a \in \mathbb{F}_q} \chi(ac_i) \right)
= \dfrac{1}{q} \sum_{a \in \mathbb{F}_q} \sum_{c \in C} \chi(ac_i)
.$$
We split this sum into the terms with $a = 0$ and $a \neq 0$, 
$$
|B| = \frac{1}{q} \left( \sum_{c \in C} 1 + \sum_{a \in \mathbb{F}_q^*} \sum_{c \in C} \chi(ac_i) \right)
= \frac{|C|}{q} + \frac{1}{q} \sum_{a \in \mathbb{F}_q^*} \sum_{c \in C} \chi(ac_i).
$$
We now analyze the inner sum. Fix $a \in \mathbb{F}_q^*$ and observe
$$
\sum_{c \in C} \chi(ac_i) = \sum_{c \in C} \chi(\langle ae_i, c \rangle)
.$$
Let $v = ae_i$. Then, by character orthogonality (See Lemma \ref{lem:1.1}) we get,
$$
\sum_{c \in C} \chi(\langle v, c \rangle)=\sum_{c \in C} \chi(v\cdot c^t) = 
\begin{cases}
|C| & \text{if } v \in C^\perp \\
0 & \text{otherwise.}
\end{cases}
$$
Since $C^\perp$ is also linear, $ae_i \in C^\perp$ if and only if $e_i \in C^\perp$, therefore we get,
$$
\sum_{a \in \mathbb{F}_q^*} \sum_{c \in C} \chi(ac_i) =
\begin{cases}
(q-1)|C| & \text{if } e_i \in C^\perp \\
0 & \text{otherwise.}
\end{cases}
$$
Thus, substituting this in $|B|$ we get,
$$
|B| =
\begin{cases}
\frac{|C|}{q} + \frac{(q-1)|C|}{q} = |C| & \text{if } e_i \in C^\perp \\
\frac{|C|}{q} & \text{otherwise.}
\end{cases}
$$
Substituting $|B|$ in equation $(1)$, we get count of codewords which are nonzero at $i^{th}$ coordinates. 
$$
|A| = |C| - |B| =
\begin{cases}
0 & \text{if } e_i \in C^\perp \\
\frac{(q-1)|C|}{q} & \text{otherwise.}
\end{cases}
$$
\end{proof}

\section{Relating Support Enumerators of $C$ and $C^\perp$}
 
In this section, we establish a \textit{MacWilliams'-type relationship} between the \textit{support enumerators} of a linear code and those of its dual. 
While the classical MacWilliams' identity relates the weight enumerators of a code and its dual via the orthogonality of additive characters, 
here we derive a similar identity by analyzing the \textit{coordinate-level structure} of the code.

This approach exploits the constraints induced by the intersections of the coordinate projections of $C$ and $C^{\perp}$, 
thereby capturing how the supports of codewords are distributed across coordinates in $\mathbb{F}_q^n$. 
The resulting identity provides a direct algebraic connection between $S_C(z)$ and $S_{C^{\perp}}(z)$, 
incorporating the contributions of coordinates that belong to $C$, $C^{\perp}$, and their intersection.

\begin{thm} \label{thm:4}
 Let $C \subseteq \mathbb{F}_q^n$ be a linear code and let $C^\perp$ be its dual code. Let $S_C(z)$ and $S_{C^{\perp}}(z)$ be support enumerators of code $C$ and $C^\perp$ respectively. 
Then the following identity holds:
$$
\frac{1}{|C|} S_C(z) + \frac{1}{|C^\perp|} S_{C^\perp}(z)
= \frac{q - 1}{q} \left( \sum_{i = 1}^n z^i + \sum_{\{i : e_i \notin C \cup C^\perp\}} z^i \right).
$$
\end{thm}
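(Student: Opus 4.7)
The strategy is a direct coefficient-by-coefficient comparison that reduces entirely to Theorem~\ref{thm:3}. I would first rewrite the left-hand side as
\[
\sum_{i=1}^n \left(\tfrac{1}{|C|}S_i(C) + \tfrac{1}{|C^\perp|}S_i(C^\perp)\right)z^i,
\]
and the right-hand side as $\sum_{i=1}^n c_i z^i$, where $c_i = \tfrac{q-1}{q}$ when $e_i \in C \cup C^\perp$ and $c_i = \tfrac{2(q-1)}{q}$ when $e_i \notin C\cup C^\perp$. The task then reduces to showing that the two coefficient sequences agree for every $i$.

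Theorem~\ref{thm:3} applied to $C$ gives $\tfrac{1}{|C|}S_i(C) = \tfrac{q-1}{q}$ when $e_i \notin C^\perp$ and $0$ otherwise. Applying the same theorem to $C^\perp$, and using the involution $(C^\perp)^\perp = C$, yields $\tfrac{1}{|C^\perp|}S_i(C^\perp) = \tfrac{q-1}{q}$ when $e_i \notin C$ and $0$ otherwise. So for each fixed $i$, each of the two summands on the LHS is one of two explicit values, determined by membership of $e_i$ in $C$ and in $C^\perp$.

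The key simplifying observation is that $e_i$ cannot simultaneously lie in $C$ and $C^\perp$: since $e_i \cdot e_i = 1 \neq 0$ in $\mathbb{F}_q$, membership $e_i \in C \cap C^\perp$ would force $e_i \in e_i^\perp$, a contradiction. This collapses the four a priori possibilities into three mutually exclusive cases for each $i$: (i) $e_i \in C$ and $e_i \notin C^\perp$; (ii) $e_i \notin C$ and $e_i \in C^\perp$; (iii) $e_i \notin C \cup C^\perp$. Summing the two quotients, cases (i) and (ii) each contribute $\tfrac{q-1}{q}$ to the coefficient of $z^i$, while case (iii) contributes $\tfrac{2(q-1)}{q}$. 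This matches $c_i$ in every case, and summing over $i$ delivers the claimed identity.

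I do not anticipate a genuine obstacle: the proof is essentially bookkeeping powered by Theorem~\ref{thm:3}. The only subtle ingredient is the observation that $e_i \in C \cap C^\perp$ is impossible, which is precisely what explains the form of the correction term $\sum_{e_i \notin C\cup C^\perp} z^i$ on the right-hand side---it records exactly the coordinates where both LHS summands contribute and hence where the coefficient doubles.
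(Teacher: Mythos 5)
Your proposal is correct and follows essentially the same route as the paper: both apply Theorem~\ref{thm:3} to $C$ and to $C^\perp$ (using $(C^\perp)^\perp = C$), observe that $e_i \in C \cap C^\perp$ is impossible since $e_i \cdot e_i^t \neq 0$, and then compare coefficients of $z^i$ case by case over the resulting partition of the coordinate set. The only cosmetic difference is that the paper names four subsets and discards the empty one, whereas you collapse directly to three cases.
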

\begin{proof}
Let $e_i \in \mathbb{F}_q^n$ denote the standard basis vector.
By Theorem $\ref{thm:3}$, 
$$
S_i= |\{c \in C : c_i \ne 0\}|
=
\begin{cases}
0 & \text{if } e_i \in C^\perp \\
\frac{q - 1}{q} \cdot |C| & \text{otherwise.} 
\end{cases}
$$
Similarly, $S_i^\perp$ denoting the number of codewords in $C^\perp$ with $i^{th}$ nonzero coordinate is given by 
$$
S_i^\perp =
|\{c \in C^\perp : c_i \ne 0\}|
=
\begin{cases}
0 & \text{if } e_i \in C \\
\frac{q - 1}{q} \cdot |C^\perp| & \text{otherwise}.
\end{cases}
$$
Normalizing $S_i$ and $S_i^\perp$ by dividing their code sizes we get,   
$$
\frac{S_i}{|C|} =
\begin{cases}
0 & \text{if } e_i \in C^\perp \\
\frac{q - 1}{q} & \text{otherwise}
\end{cases},
\quad
\frac{S_i^\perp}{|C^\perp|} =
\begin{cases}
0 & \text{if } e_i \in C \\
\frac{q - 1}{q} & \text{otherwise.}
\end{cases}
$$
Now partition the coordinates $\{1, 2, \dots, n\}$ into four disjoint subsets,
 $$A = \{i : e_i \in C \cap C^\perp\}
,\quad B = \{i : e_i \in C \setminus C^\perp\}$$
$$C = \{i : e_i \in C^\perp \setminus C\}, 
\quad D = \{i : e_i \notin C \cup C^\perp\}$$
We now analyze values of $S_i$ and $S^\perp_i$ by taking cases on above partitions.
\begin{enumerate}
 \item For $i \in A$, $e_i$ lies in $C$ and in $C^\perp.$ This is an imppossible case, as $(e_i\cdot e_i^t)\neq 0.$ 
 \item For $i \in B \cup C$, exactly one of $S_i$, $S_i^\perp$ is nonzero and so it contributes $\frac{q - 1}{q} z^i$ in the enumerator.
 \item For $i \in D$, both terms are nonzero and hence such an $i$  contributes $2 (\frac{q - 1}{q} z^i)$.
\end{enumerate}
We now compute,
\begin{align*}
  \frac{1}{|C|} S_C(z) + \frac{1}{|C^\perp|} S_{C^\perp}(z)
&= \sum_{i=1}^n \left( \frac{S_i}{|C|} + \frac{S_i^\perp}{|C^\perp|} \right) z^i\\
&= \frac{q - 1}{q} \left( \sum_{i \in B \cup C} z^i + 2 \sum_{i \in D} z^i \right).
 \end{align*}
Now rewrite the expression using following
$$
\sum_{i = 1}^n z^i = \sum_{i \in B \cup C \cup D} z^i,\  i.e.,
\sum_{i \in B \cup C} z^i +  \sum_{i \in D} z^i = \sum_{i = 1}^n z^i.
$$
Hence we have
$$
\frac{1}{|C|} S_C(z) + \frac{1}{|C^\perp|} S_{C^\perp}(z)
= \frac{q - 1}{q} \left( \sum_{i = 1}^n z^i + \sum_{i \in D} z^i \right).
$$
That is, 
$$
\frac{1}{|C|} S_C(z) + \frac{1}{|C^\perp|} S_{C^\perp}(z)
= \frac{q - 1}{q} \left( \sum_{i = 1}^n z^i + \sum_{\{i : e_i \notin C \cup C^\perp\}} z^i\right).
$$
\end{proof}
\noindent Henceforth, we refer to Theorem \ref{thm:4} as the support enumerator identity. It is straightforward to express support enumerator of $C^\perp$ in terms of support enumerator of $C$ and conversely. 

\section{Necessary condition for Self-Duality}
In this section, we will determine necessary condition for a code to be self-dual. We first recall definition of self-dual codes and then using support enumerator identity i.e., Theorem \ref{thm:4} we derive new criterion to identify whether code $C$ is self-dual or not.  

\begin{dfn}{ (Self-Dual Code)} \cite{RR}\\
A code is called self-dual if it is equal to its dual i.e.,  $C=C^\bot.$
\end{dfn}

For example,  the linear $[8,4,4]$ extended Hamming code over $\mathbb{F}_2$ is self-dual, as it can be generated from its parity check matrix. 

\begin{thm}\textbf{(Necessary condition for Self-duality)} \label{thm:5}\\
Let \( C \subseteq \mathbb{F}_q^n \) be a linear code. If $C$ is self-dual then
$$
\frac{1}{|C|} S_C(z) = \frac{q-1}{2q} \left( \sum_{i=1}^n z^i + \sum_{\{i:e_i \notin C\}} z^i \right).
$$ 
\end{thm}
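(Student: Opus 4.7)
The plan is to obtain this statement as an immediate corollary of the support enumerator identity (Theorem \ref{thm:4}), by specializing it to the case $C = C^\perp$. Since self-duality gives us three simplifications at once, $|C| = |C^\perp|$, $S_C(z) = S_{C^\perp}(z)$, and $C \cup C^\perp = C$, the identity should essentially collapse to the claimed formula after dividing by two.

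First I would invoke Theorem \ref{thm:4}, which asserts
\[
\frac{1}{|C|} S_C(z) + \frac{1}{|C^\perp|} S_{C^\perp}(z)
= \frac{q-1}{q}\left(\sum_{i=1}^n z^i + \sum_{\{i:\, e_i \notin C \cup C^\perp\}} z^i\right).
\]
Next, I would substitute the self-duality hypothesis $C = C^\perp$. On the left-hand side, both normalized summands coincide, so the expression becomes $\tfrac{2}{|C|} S_C(z)$. On the right-hand side, the index set $\{i : e_i \notin C \cup C^\perp\}$ simplifies to $\{i : e_i \notin C\}$, since $C \cup C^\perp = C$.

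Combining these substitutions gives
\[
\frac{2}{|C|} S_C(z) = \frac{q-1}{q}\left(\sum_{i=1}^n z^i + \sum_{\{i:\, e_i \notin C\}} z^i\right),
\]
and dividing both sides by $2$ yields the desired identity. The only small point to check is that the polynomial identity in Theorem \ref{thm:4} can be divided by $2$ as a formal polynomial over $\mathbb{Q}$ (or $\mathbb{C}$), which is fine since the enumerator polynomials are being treated as rational-coefficient generating functions in the statement.

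There is no real obstacle here: the entire proof is a direct specialization of the preceding theorem, so the main thing to be careful about is bookkeeping, specifically making sure that $C \cup C^\perp$ reduces correctly to $C$ under self-duality, and that the factor of $2$ on the left matches the factor $\tfrac{q-1}{2q}$ appearing on the right.
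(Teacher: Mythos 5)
Your proposal is correct and follows essentially the same route as the paper: both substitute $C = C^\perp$ into Theorem~\ref{thm:4}, use $|C| = |C^\perp|$, $S_C(z) = S_{C^\perp}(z)$, and $C \cup C^\perp = C$ to collapse the identity, and then divide by $2$. No gaps or differences worth noting.
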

\begin{proof}
If a linear code $C$ is self-dual, i.e., $C = C^\perp$, then it is easy to see that $C \cup C^\perp = C$ and $C \cap C^\perp = C.$ Also, $|C| = |C^\perp|$ and $S_C(z) = S_{C^\perp}(z)$ holds.

We substitute these statements in Theorem \ref{thm:4}, which states
$$
\frac{1}{|C|} S_C(z) + \frac{1}{|C^\perp|} S_{C^\perp}(z) = \frac{q-1}{q} \left( \sum_{i=1}^n z^i + \sum_{\{i:e_i \notin C \cup C^\perp\}} z^i\right).
$$
Therefore we get
$$
\frac{2}{|C|} S_C(z) = \frac{q-1}{q} \left( \sum_{i=1}^n z^i + \sum_{\{i:e_i \notin C\}} z^i \right).  
$$
Thus, we get,
$$
\frac{1}{|C|} S_C(z) = \frac{q-1}{2q} \left( \sum_{i=1}^n z^i + \sum_{\{i:e_i \notin C\}} z^i \right).
$$

\end{proof}

\section{Illustrations of Theorems}
In this section, we provide explicit examples illustrating the theorems established in the earlier sections. These examples are intended to demonstrate the applicability of the results and to clarify the underlying constructions.

\subsection{Illustration of support enumerator identity}

Below we illustrate support enumerator identity i.e., Theorem \ref{thm:4} using simplex and repetition codes. 
\begin{enumerate}
\item Illustration using simplex code over $\mathbb{F}_q$:

Let \( C \subseteq \mathbb{F}_q^n \) be the \([n, m, q^{m - 1}]\) simplex code, where \( n = \frac{q^m - 1}{q - 1} \), and let \( C^\perp \subseteq \mathbb{F}_q^n \) be its dual Hamming code of parameters \([n, n - m, 3]\).

\noindent First we find support enumerator of the Simplex Code $C$.
From $(*)$ of Section~\ref{sec:supportframework}, we see,  support enumerator of Simplex code is
$$
S_{\mathcal{C}}(z) = \sum_{i=1}^{n} S_i z^i = q^{m-1}(q - 1) \cdot \sum_{i=1}^{n} z^i.
$$ 

Hence,
$$
\frac{1}{|C|} S_C(z) = \frac{1}{q^m} \sum_{i = 1}^n S_i z^i = \frac{1}{q^m} \cdot q^{m - 1}(q - 1) \sum_{i = 1}^n z^i = \frac{q - 1}{q} \sum_{i = 1}^n z^i.
$$

\noindent Next we find support enumerator of the dual of Simplex code, $C^\perp$ i.e., Hamming Code. 
From \cite{FN},\cite{huffman2003},\cite{Borges2019}, we see that in $[n, n - m, 3]$ Hamming code, the total number of codewords is $|C^\perp| = q^{n - m}$ and each nonzero coordinate appears in exactly \( q^{n - m - 1}(q - 1) \) codewords. 

Hence,
$$
\frac{1}{|C^\perp|} S_{C^\perp}(z) = \frac{q^{n - m - 1}(q - 1)}{q^{n - m}} \sum_{i = 1}^n z^i = \frac{q - 1}{q} \sum_{i = 1}^n z^i
$$
Adding both enumerators we get left-hand side of the support enumerator identity,
$$
\frac{1}{|C|} S_C(z) + \frac{1}{|C^\perp|} S_{C^\perp}(z)
= \frac{q - 1}{q} \sum_{i = 1}^n z^i + \frac{q - 1}{q} \sum_{i = 1}^n z^i
= \frac{2(q - 1)}{q} \sum_{i = 1}^n z^i.
$$
\noindent For right-hand side of the identity we examine and compute following terms in the identity,
$$
\sum_{i = 1}^n z^i + \sum_{\{i : e_i \notin C \cup C^\perp\}} z^i.
$$

In this setting, no standard basis vector $e_i$ is in the Simplex code $C$, because all nonzero codewords in $C$ have full support when $q>2$ and when $q=2$ (binary), nontrivial Simplex code has minimum hamming weight greater than or equal to $2$. 

Also, no $e_i \in C^\perp$ since the Hamming code has minimum distance $\geq 3$. Therefore, $e_i \notin C \cup C^\perp$ for all $ i$.

So,
$
\displaystyle \sum_{\{i : e_i \notin C \cup C^\perp\}} z^i = \sum_{i = 1}^n z^i.
$

Thus, the right-hand side becomes:
$$
\frac{q - 1}{q} \left( \sum_{i = 1}^n z^i + \sum_{i = 1}^n z^i \right) = \frac{2(q - 1)}{q} \sum_{i = 1}^n z^i,
$$
which is identical to the left-hand side. Hence, the support enumerator identity is verified for the Simplex-Hamming dual pair.

\item {Illustration using simplex code $[7,3,4]$ over $\mathbb{F}_2$:}

Let \( C \subseteq \mathbb{F}_2^7 \) be the binary simplex code with parameters \( [7,3,4] \), and let \( C^\perp \) be its dual Hamming code \( [7,4,3] \). Then,
$
|C| = 2^3 = 8, \quad |C^\perp| = 2^4 = 16.
$\\
Let a generator matrix for $C$ be $G$ and generator matrix of it's dual be $H$ given by
$$G = \begin{bmatrix}
1 & 0 & 0 & 1 & 1 & 0 & 1 \\
0 & 1 & 0 & 1 & 0 & 1 & 1 \\
0 & 0 & 1 & 0 & 1 & 1 & 1
\end{bmatrix}, 
H = \begin{bmatrix}
1 & 0 & 0 & 0 & 1 & 1 & 0 \\
0 & 1 & 0 & 0 & 1 & 0 & 1 \\
0 & 0 & 1 & 0 & 0 & 1 & 1 \\
0 & 0 & 0 & 1 & 1 & 1 & 1
\end{bmatrix}.$$
Below is the list of codewords of $C$ and $C^{\perp}$ respectively.

$${
\begin{array}{c|c}
\text{Message} & \text{Codeword} \\
\hline
000 & 0000000 \\
001 & 0010111 \\
010 & 0101011 \\
011 & 0111100 \\
100 & 1001101 \\
101 & 1011010 \\
110 & 1100110 \\
111 & 1110001 
\end{array}
\begin{array}{c|c}
\text{Message} & \text{Codeword} \\
\hline
0000 & 0000000 \\
0001 & 0001111 \\
0010 & 0010011 \\
0011 & 0011100 \\
0100 & 0100101 \\
0101 & 0101010 \\
0110 & 0110110 \\
0111 & 0111001 \\
1000 & 1000110 \\
1001 & 1001001 \\
1010 & 1010101 \\
1011 & 1011010 \\
1100 & 1100011 \\
1101 & 1101100 \\
1110 & 1110000 \\
1111 & 1111111 \\
\end{array}
}$$

Now we find the support enumerator for the Simplex code and the Hamming code.
For $C$ we see, $S_i=4$ for all $1\leq i\leq7$ and for $C^\perp$, $S_j=8$ for all $1\leq j\leq7$.\\
Therefore, support enumerators for $C$ and $C^\perp$ will be respectively,
$$
S_C(z) = 4z^1 + 4z^2 + 4z^3 + 4z^4 + 4z^5 + 4z^6 + 4z^7,
$$
$$
S_C^\perp(z) = 8z^1 + 8z^2 + 8z^3 + 8z^4 + 8z^5 + 8z^6 + 8z^7.
$$
\noindent Thus,\\ 
$\frac{1}{|C|} S_C(z) + \frac{1}{|C^\perp|} S_{C^\perp}(z)=\frac{1}{8} S_C(z) + \frac{1}{16} S_{C^\perp}(z)=z^1 + z^2 + z^3 + z^4 + z^5 + z^6 + z^7=\sum_{i=1}^7 z^i.$ 

For right-hand side of the identity, as \( q = 2 \), we have \( \frac{q-1}{q} = \frac{1}{2} \), so right-hand side becomes 
$$
= \frac{1}{2} \left( \sum_{i=1}^7 z^i + \sum_{\{i:e_i \notin C \cup C^\perp\}} z^i \right)
$$

We analyze that $e_i \notin C\cup C^\perp$ for all $i$.  Hence right-hand side equals
$$
 = \frac{1}{2} \left( \sum_{i=1}^7 z^i + \sum_{i=1}^7 z^i \right)=\sum_{i=1}^7 z^i.
$$
This is identical to left-hand side. Thus, 
support enumerator identity is verified i,e.,
$$
{
\frac{1}{|C|} S_C(z) + \frac{1}{|C^\perp|} S_{C^\perp}(z)
= \frac{q - 1}{q} \left( \sum_{i = 1}^n z^i + \sum_{\{i : e_i \notin C \cup C^\perp\}} z^i \right)
}
$$
holds true for $[7,3,4]_2$ simplex code.

\item Illustration using Repetition code $[3,1,3]$ over ${\mathbb{F}_2}$ (binary).

Let \( C \subseteq \mathbb{F}_2^3 \) be the binary repetition code of length 3, given by:
$$
C = \{000, 111\}, \quad |C| = 2
$$
Its dual, is the even-weight code,
$
C^\perp = \{000, 011, 101, 110\}, \ |C^\perp| = 4.
$

\noindent Now we find support enumerators for $C$ and $C^\perp$. The only nonzero codeword in $C$ is $(111)$, which has support at all three coordinates. So:
$$
S_C(z) = z^1 + z^2 + z^3, \quad \text{and} \ \frac{1}{|C|} S_C(z) = \frac{1}{2}(z^1 + z^2 + z^3).
$$

In $C^\perp,$ each coordinate appears in exactly 2 of the 3 nonzero codewords, so:
$$
S_{C^\perp}(z) = 2z^1 + 2z^2 + 2z^3, \quad \frac{1}{|C^\perp|} S_{C^\perp}(z) = \frac{1}{4}(2z^1 + 2z^2 + 2z^3) = \frac{1}{2}(z^1 + z^2 + z^3).
$$
Hence,$$
\frac{1}{|C|} S_C(z) + \frac{1}{|C^\perp|} S_{C^\perp}(z) = \frac{1}{2}(z^1 + z^2 + z^3) + \frac{1}{2}(z^1 + z^2 + z^3) = z^1 + z^2 + z^3.
$$

For right-hand side we see, 
$
\frac{q - 1}{q} = \frac{1}{2},
$

We check the standard basis vectors $e_1 = 100, e_2 = 010, e_3 = 001$, none of these belong to $C$ or $C^\perp$, so:
$$
\sum_{\{i : e_i \notin C \cup C^\perp\}} z^i = \sum_{i=1}^3 z^i.
$$

Hence, right-hand side of the support enumerator identity becomes 
\begin{align*}
\frac{q - 1}{q} \left( \sum_{i = 1}^n z^i + \sum_{\{i : e_i \notin C \cup C^\perp\}} z^i  \right)&=\frac{1}{2} \left( \sum_{i=1}^3 z^i + \sum_{i=1}^3 z^i\right)\\ &= z^1 + z^2 + z^3.
\end{align*}

This proves the support enumerator identity, $$
\frac{1}{|C|} S_C(z) + \frac{1}{|C^\perp|} S_{C^\perp}(z) = \frac{q - 1}{q} \left( \sum_{i = 1}^n z^i + \sum_{\{i : e_i \notin C \cup C^\perp\}} z^i \right)
.$$
\end{enumerate}

\subsection{Illustration of Necessary condition for self-duality}
Here we illustrate criterion for self-duality i.e., Theorem \ref{thm:5} using the linear $[8,4,4]$ extended Hamming code $C$ over $\mathbb{F}_2$. 

Code $C$ has
$
n = 8, \  k = 4, \  q = 2, \  |C| = 2^{k} = 2^4 = 16
.$
Below is the list of codewords in $C$.\\
$$\begin{array}{cccc}
00000000 & 11110000 & 11001100 & 10101010 \\
00111100 & 01011010 & 10010110 & 01100110 \\
10011001 & 01010101 & 11000011 & 01111000 \\
10110100 & 00011110 & 00100011 & 11101111 \\
\end{array}$$

\noindent We now compute $S_i$ for $1\leq i\leq 8$. Since each coordinate appears with a nonzero value in exactly $8$ codewords, so $S_i=8$ for all $i$. Therefore,
$$
S_C(z) = \sum_{i=1}^8 S_i z^i = 8z + 8z^2 + 8z^3 + 8z^4 + 8z^5 + 8z^6 + 8z^7 + 8z^8 = 8 \sum_{i=1}^8 z^i.
$$
Now divide by $|C| = 16$,
$\displaystyle
\frac{1}{|C|} S_C(z) = \frac{8}{16} \sum_{i=1}^8 z^i = \frac{1}{2} \sum_{i=1}^8 z^i.
$
\noindent
This gives the left-hand side of the criterion.

\noindent Now we compute the right-hand side i.e., 
$
\displaystyle \frac{q-1}{2q} \left( \sum_{i=1}^n z^i + \sum_{\{i:e_i \notin C\}} z^i \right).
$\\
Each standard basis vector \( e_i \) has weight 1. Since all nonzero codewords in \( C \) have weight at least 4, none of the \( e_i \) are in \( C \). Hence 
$ \displaystyle
\sum_{\{i:e_i \notin C\}} z^i = \sum_{i=1}^8 z^i
$.
Thus, the right-hand side becomes  
$ \displaystyle
\frac{1}{4} \left(\sum_{i=1}^8 z^i
+\sum_{i=1}^8 z^i\right)$ i.e., 
$ \displaystyle
\frac{1}{2} \left(\sum_{i=1}^8 z^i
\right),$
which is same as the left-hand side. This verifies necessary condition for self-duality.

\end{document}